\DeclareMathOperator*{\argmin}{arg\,min}
\newtheorem{proposition}{Proposition}
\title{Design and Analysis of a Synthetic Prediction Market using Dynamic Convex Sets}
\author{
Nishanth Nakshatri\footnote{Dept. of Computer Science and Engineering, Pennsylvania State University, University Park, PA 16802} \and 
Arjun Menon\footnotemark[1] \and 
C. Lee Giles\footnote{College of Information Sciences and Technology, Pennsylvania State University, University Park, PA 16802} \and 
Sarah Rajtmajer\footnotemark[2] \and 
Christopher Griffin\footnote{Applied Research Laboratory, Pennsylvania State University, University Park, PA 16802}
}
\begin{document}
\maketitle


\begin{abstract}
We present a synthetic prediction market whose agent purchase logic is defined using a sigmoid transformation of a convex semi-algebraic set defined in feature space. Asset prices are determined by a logarithmic scoring market rule. Time varying asset prices affect the structure of the semi-algebraic sets leading to time-varying agent purchase rules. We show that under certain assumptions on the underlying geometry, the resulting synthetic prediction market can be used to arbitrarily closely approximate a binary function defined on a set of input data. We also provide sufficient conditions for market convergence and show that under certain instances markets can exhibit limit cycles in asset spot price. We provide an evolutionary algorithm for training agent parameters to allow a market to model the distribution of a given data set and illustrate the market approximation using two open source data sets. Results are compared to standard machine learning methods.
\end{abstract}




\section{Introduction}
Prediction markets in their current form trace their roots to the original studies by Hanson \cite{hanson1990market,hanson1991more,hanson1995could,ray1997idea} and since then have been studied and used extensively \cite{wolfers2004prediction,servan2004prediction,manski2006interpreting,berg2003prediction,wolfers2006prediction,dai2020wisdom,chakraborty2016trading}. For a survey of work in this area through 2007 see \cite{tziralis2007prediction}. In these markets, assets corresponding to future events (e.g., elections \cite{berg1997makes}, sports outcomes \cite{thaler1988anomalies} etc.) can be bought and sold thereby manipulating underlying asset prices. These asset prices can be interpreted as probabilities \cite{manski2006interpreting,wolfers2006interpreting} thereby providing a mechanism for event forecasting. Recent applications of prediction markets include forecasting infectious disease activity \cite{polgreen2007use}, evaluating scientific hypotheses \cite{almenberg2009experiment}, predicting the reproducibility of scientific work \cite{dreber2015using}, and aggregation of employee wisdom in a corporate setting \cite{cowgill2009using,gillen2012information}. 

In practice, many of these markets have been remarkably successful in efficiently aggregating information about uncertain future events \cite{smith2003constructivist}. There are a number of compelling explanations for this. Financial stakes incentivize participants to search for better information \cite{arrow2008promise} and the forecasts of more confident agents are weighted more heavily, where confidence is measured as willingness to risk more money \cite{goel2010prediction}. The efficient markets hypothesis suggests that the market price reflects available information at least as well as any competing method \cite{expectations1961theory}, although some have suggested that this hypothesis is not upheld in prediction markets \cite{manski2006interpreting}. Work has explored specific concerns about liquidity, price manipulation, outcome manipulation, bias, and their respective impacts on market efficiency \cite{tetlock2008liquidity,hanson2004manipulators,tetlock2007optimal,wolfers2006interpreting,sunstein2006infotopia,ottaviani2007outcome}. A separate thread of this research has studied the accuracy of prediction markets based on real versus play money, to disentangle the specific role of financial incentives (see, e.g., \cite{pennock2001real,servan2004prediction,rosenbloom2006statistical,gruca2008incentive}). The arrival of blockchain technologies has facilitated the development of decentralized prediction markets (e.g., \cite{Augur,Gnosis,Stox}), which benefit from the trust and transparency inherent in these ownerless peer-to-peer systems. Blockchain-based prediction markets offer anonymity for their traders \cite{clark2014decentralizing,heilman2016blindly}, support broad participation, and reduce single points of failure \cite{bentov2017decentralized}. Design of decentralized prediction markets is an ongoing area of research \cite{peterson2015augur,subramanian2017decentralized,wang2018preliminary}.  

Over the last decade, a body of work has emerged on so-called artificial (equivalently, synthetic) prediction markets. These are numerically simulated markets populated by artificial participants (agents) for the purpose of supervised learning of probability estimators \cite{barbu2012introduction}. Like their human-populated counterparts, artificial prediction markets have found a number of applications, including lymph node detection from CT scans \cite{barbu2013artificial} and early stage detection of epidemics from crowd-sourced data \cite{jahedpari2014artificial}. The theoretical promise of artificial markets was first explored by Chen and colleagues \cite{chen2008complexity,chen2010new,abernethy2011optimization}. They highlight the deep mathematical connections between prediction markets and learning, demonstrating that any cost function based prediction market with bounded loss can be interpreted as a no-regret learning algorithm \cite{chen2010new}. And, that every convex cost function based prediction market can be interpreted as a Follow the Regularized Leader algorithm with a convex regularizer \cite{abernethy2011optimization}.

In an initial construction put forward by Barbu and Lay \cite{barbu2012introduction} patterned after the Iowa Electronic Markets \cite{wolfers2004prediction}, each agent is represented as a budget and a simple betting function. During training, each agent's budget is updated based on the accuracy of its prediction for each training data point. The contract price for an outcome is an estimator of its class-conditional probability. These markets, authors found, were able to outperform random forest and implicit online learning in benchmark classification tasks. In follow-up work \cite{lay2012artificial}, the same authors generalized the market framework to support regression and reported similar gains in performance. Storkey and colleagues \cite{storkey2011machine,storkey2012isoelastic} develop an artificial prediction market with a different market mechanism, the so-called machine learning market. In their formulation, each agent purchases contracts for possible outcomes in order to maximize its own utility function. The equilibrium price of the contracts is computed by an optimization procedure. The market is shown to outperform standard classifiers on a number of machine learning benchmarks. A 2014 extension of this work \cite{hu2014multi} models agents using static risk measures. The authors demonstrate that the resulting market approaches a global objective, formally asserting the potential of the market to solve problems in machine learning. More recently, authors have proposed continuous artificial prediction markets \cite{jahedpari2017online} for online regression. These markets consider agents with adaptive trading strategies, using reinforcement learning to dynamically identify actions that maximize their own reward.

In this paper we study synthetic prediction markets in which the agents' purchase logic is governed by time-varying semi-algebraic sets. For the purposes of this work, we focus on convex semi-algebraic sets defined by ellipsoids in $\mathbb{R}^n$. Time variation of the set volume is governed by asset prices in the market. Agents specialize in the purchase of a single asset class and will only purchase an asset at time $t$ if an input feature vector is contained in the (time-varying) set defining the agent. We show the following:
\begin{enumerate}
    \item Given an arbitrarily large but finite labeled data set, we show how to construct a market that will perfectly assign to each input the appropriate output. This allows us to derive a form of universal approximation for our market structure.
    \item We provide a sufficient condition in terms of the underlying geometric structures for a market to converge to a single final price for all assets. 
    \item We show that the market can exhibit limit cycles and these limit cycles correspond to input data that lie near decision boundaries of agents. 
    \item We develop an evolutionary algorithm for training agent behavior in a market to represent a set of input data.
    \item We illustrate this algorithm using three open source data sets.
\end{enumerate}
Our results are complementary to the existing synthetic prediction market literature and establish a geometric foundation for building more complex prediction markets.

The remainder of this paper is organized as follows: In \cref{sec:Model} we discuss the synthetic prediction market model and establish relevant notation. Theoretical results on the prediction market are established in \cref{sec:Properties}. We discuss an algorithm for training a market to classify samples from a specific data set in \cref{sec:Training}. In \cref{sec:Results} we show empirical results on three open source machine learning data sets. Conclusions and future directions of research are presented in \cref{sec:Conclusions}.

\section{Binary Market Model}\label{sec:Model}



Let $\mathbb{Z}_+$ be the positive integers. Assume we have a binary option market with the two options denoted as Assets $0$ and $1$.  Assume $\mathbf{q}_t = (q^0_t,q^1_t) \in \mathbb{Z}_+^2$ units of (Asset 0, Asset 1) at time $t$ have been sold. A (binary option) market \cite{wolfers2006interpreting} $M$ consists of a set of agents $\mathcal{A} = \{a_1,\dots,a_n\}$ who buy (and sell) Assets $0$ and $1$ using policies $\{\gamma_1,\dots,\gamma_n\}$. If agent purchase policy $\gamma_i$ is conditioned on exogenous information $\mathbf{x} \in D \subseteq \mathbb{R}^n$ then, $\gamma_i : (\mathbf{q}_t,\mathbf{x}) \mapsto (r^0,r^1)$ and Agent $i$ purchases $r^0$ units of Asset $0$ and $r^1$ units of Asset $1$, thus causing a state update. 
When the market is conditioned on $\mathbf{x} \in D$ we denote it $M_\mathbf{x}$.

Assuming time passes discretely (is epochal) and we have an input $\mathbf{x} \in D$,  market $M_\mathbf{x}$ is a dynamical system $(\mathbb{Z}_+^2, \mathbb{Z}_+,\Gamma_\mathbf{x})$ where the dynamic $\Gamma_\mathbf{x}:\mathbb{Z}_+^2 \to \mathbb{Z}_+^2$ arises from the interaction of the individual policies $\{\gamma_1,\dots,\gamma_n\}$ and the conditional information $\mathbf{x}$. At any time $t$, the state $\mathbf{q}_t$ can be mapped into a pair of asset prices $\mathbf{p}_t = (p^0_t,p^1_t)$ that may be used in the policies of the agents in place of $\mathbf{q}_t$. 

\subsection{Market Details}
For the remainder of this paper, we will assume that $\Gamma_\mathbf{x}$ is fixed when given $\mathbf{x}$ and that an initial state $\mathbf{q}_0$ is given. We use the Logarithmic Market Scoring Rule (LMSR) \cite{hanson2007logarithmic} to aggregate estimates from a set of agents $\mathcal{A} = \{a_1,\dots,a_n\}$ and determine asset prices. Given state $(q^0_t,q^1_t)$, the current asset prices are computed using LMSR: 
\begin{gather*}
p^0_t = \frac{\exp{(\beta q^0_t)}}{\exp{(\beta q^0_t)} + \exp{(\beta q^1_t)}}\\
p^1_t = \frac{\exp{(\beta q^1_t)}}{\exp{(\beta q^0_t)} + \exp{(\beta q^1_t)}}.
\end{gather*}
This is the softmax function (Boltzmann distribution with constant $\beta = k/T$ for fixed $k$ and $T$) of the inputs $(q^0_t,q^1_t)$. The $\beta$ term is a liquidity factor \cite{lekwijit2018optimizing} that adjusts the amount the price will increase or decrease given a change in the asset quantities. By using a Boltzmann distribution, the prices can be interpreted as probabilities. 

The true asset purchase prices (trade costs) are not given by $\mathbf{p}_t$, since LMSR incorporates a market maker cost. The trade costs are given by:
\begin{gather*}
\kappa_t^0(\Delta q^0) = \frac{1}{\beta} \log\left(\frac{\exp[\beta (q_t^0+\Delta q^0)] + \exp[\beta q_t^1]}{\exp[\beta q_t^0] + \exp[\beta q_t^1]}\right)\\
\kappa_t^1(\Delta q^1) = \frac{1}{\beta} \log\left(\frac{\exp[\beta q_t^0] + \exp[\beta (q_t^1 + \Delta q^1)]}{\exp[\beta q_t^1] + \exp[\beta q_t^2]}\right),
\end{gather*}
where $\Delta q^i$ is the change in the quantify of Asset $i$ as a result of purchases defined by $\Gamma_\mathbf{x}$.

Let $P(\mathbf{x},t) = \mathbf{p}_t(\mathbf{x})$ assuming fixed $\mathbf{q}_0$ and $\Gamma_x$. The market converges to a price pair $\bar{\mathbf{p}}$ if:
\begin{equation}
\lim_{t \to \infty} P(\mathbf{x},t) = \bar{\mathbf{p}}.
\end{equation}
Convergence is not necessarily guaranteed in all markets, however for the markets we consider, we will show sufficient conditions for convergence to occur. 

Let $\phi:D \subseteq \mathbb{R}^n \to [0,1]$ be a binary function. Our objective is to construct $\Gamma_\mathbf{x}$, which defines a market $M$ and agents $\mathcal{A}$, so that:
\begin{equation}
\int_D |\phi(\mathbf{x}) - \bar{p}^1(\mathbf{x})|^2\,d\mathbf{x} < \epsilon,
\end{equation}
where $\bar{p}^1(\mathbf{x})$ is the long-run price of Asset 1 and $\epsilon > 0$ is a (small) error term. The left hand side yields the $L^2$ error when the price of Asset $1$ is used as an approximation function for $\phi$. We make this more precise in subsequent sections.

\subsection{Agent Purchase Policies}
Let $f(\mathbf{x};\bm{\theta})$ be a quasi-concave function parameterized by $\bm{\theta}$ with maximum at $\mathbf{0}$.  By this we mean a function that satisfies the inequality:
\begin{equation}
f(\lambda\mathbf{x}_1 + (1-\lambda)\mathbf{x}_2;\bm{\theta}) \geq \max\{f(\mathbf{x}_1;\bm{\theta}),f(\mathbf{x}_2;\bm{\theta})\}.
\end{equation}
If $\bm{\Theta}$ is a positive definite, diagonal matrix, then the quadratic function:
\begin{equation}
g(\mathbf{x};\bm{\theta}) = 1 - \mathbf{x}^T\bm{\Theta}\mathbf{x} = 1 - \sum_{j} \theta_i \mathbf{x}_i^2
\end{equation}
is such a function and the set:
\begin{equation}
\mathcal{E}_\mathbf{Q} = \{\mathbf{x} \in \mathbb{R}^n : g(\mathbf{x};\mathbf{Q}) \leq 0\}
\end{equation}
is an ellipsoid centered at $\mathbf{0}$ and oriented along the standard basis.

For the chosen quasi-concave function, define the translated function:
\begin{equation}
f(\mathbf{x};\mathbf{h},\bm{\theta}) = f(\mathbf{x} - \mathbf{h};\bm{\theta})
\end{equation}
In terms of the quadratic function this is just:
\begin{equation}
g(\mathbf{x};\mathbf{h},\bm{\Theta}) = (\mathbf{x} - \mathbf{h})^T\bm{\Theta}(\mathbf{x} - \mathbf{h}).
\end{equation}
Under these assumptions, $\bm{\Theta}$ defines a simple local metric that is used to determine how close the conditioning point $\mathbf{x}$ is to a reference point $\mathbf{h}$. 

Assume we are given a set of labeled training data  $\mathbf{H} = \{\mathbf{h}^i\}_{i=1}^N$ with labels $\mathbf{y} = \{y^i\}_{i=1}^N$ with $y^i \in \{0,1\}$. For each data point $\mathbf{h}_i$  in $\mathbf{H}$ (or possibly an appropriate subset of $\mathbf{H}$) with label $y^i$ define Agent $i$ who buys \textit{only} Asset $y^i$ ($i=0,1$). 
That is, we assume that Agent $i$ specializes in buying $y_i$. Given an input feature vector $\mathbf{x}$, Agent $i$ estimates the value of Asset $y_i$ using the  formula: 
\begin{equation}
\pi_t^i(\mathbf{x},p_t^{y^i};\mathbf{h}^i;\bm{\theta}^i,\alpha^i,w_p^i) =
\sigma[\alpha^i \cdot  f(\mathbf{x};\mathbf{h}^i,\bm{\theta}^i) + \theta_0^i + w_p^i (p_t^{y_i}-p^{y_i}_0)],
\label{eqn:Price}
\end{equation}
where $p_t^{y_i}$ is the price of Asset $y_i$ at time $t$, $\theta_0$ is a bias, $\alpha$ is a scaling factor and $\sigma$ is the logistic sigmoid function\footnote{A unit step function could be substituted with minimal change to the sequel.}. When using an ellipsoidal function, the exact formula is:
\begin{equation}
\pi^i_t(\mathbf{x},p^{y_i}_t;\mathbf{h}^i,\bm{\theta}^i,\alpha^i,w_p^i) = 
\sigma\left[
\alpha^i \cdot \left(1 - \sum_{j} \theta_j^i (x - h_j^i)^2 \right)+ w_p^i (p^{y_i}_t-p^{y_i}_0) + \theta_0^i\right].
\label{eqn:EllipsoidPrice}
\end{equation}
We note that if $\theta^i_j = 0$, then the ellipsoid structure is replaced (effectively) with a cylinder in $\mathbb{R}^n$. 

We assume Agent $i$ can only buy one unit of Asset $y_i$ at a time (per epoch). The agent logic defining $\gamma_i$ is then:
\begin{enumerate}
\item For $\Delta q^{y_i} = 1$, if 
\begin{displaymath}
\frac{1}{\kappa^{y_i}_t}\left(\pi^i_t\left(\mathbf{x},p_t^{y_i};\mathbf{h}^i;\bm{\theta}^i,\alpha^i,w_p^i\right) - \kappa^{y_i}_t[\Delta q^{y_i}]\right) \geq \tau,
\end{displaymath}
then the agent purchases a single unit of Asset $y_i$. Here $\tau \in [0,1)$ determines the opportunity cost considered by the agent. When $\tau = 0$, the agent purchases an asset precisely when it has sufficient funds and when it's estimated price is higher than the actual asset price.

\item Otherwise, the agent purchases nothing.
\end{enumerate}
For our model, each agent only buys when the conditioning data $\mathbf{x} \in D$ is close enough (in the derived metric) to its initialized data point $\mathbf{h}^i$. Thus, we are using the data set $\mathbf{H}$ to construct a covering of the set $D$ and then using that covering to construct the market and its dynamics.

\section{Properties of the Market}\label{sec:Properties}
In this section, we study the theoretical properties of markets in which agents have unlimited funds. 

\subsection{Approximation}
\begin{proposition} Let $\mathbf{H} = \left\{\mathbf{h}^i\right\}_{i=1}^N$ be a finite but arbitrarily large data set  with labels $\mathbf{y} = \{y^i\}_{i=1}^N$. Assume the data are separable; i.e., if $\mathbf{h}^i = \mathbf{h}^j$, then $y^i = y^j$. For all $\epsilon > 0$, there is a market $M$ with agents $\mathcal{A} = \{a^1,\dots,a^N\}$ such that for all $i = 1,\dots,N$:
\begin{equation}
    \lim_{t\to \infty} |p^1(\mathbf{h}^i,t) - y^i| < \epsilon,
    \label{eqn:Limit}
\end{equation}
where $p^1(\mathbf{x},t)$ is the price of Asset 1 in the market (the market spot price).
\label{prop:Approximation}
\end{proposition}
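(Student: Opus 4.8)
The plan is to construct the market $M$ by hand. Set the opportunity-cost parameter $\tau = 0$ and, for every $i$, place an agent at $\mathbf{h}^i$ that buys only Asset $y^i$, with bias $\theta_0^i = 0$, price sensitivity $w_p^i = 0$, a common scale $\alpha^i = \alpha$, and isotropic shape $\theta_j^i = \rho$ for every coordinate $j$. Let $\delta = \min\{\|\mathbf{h}^i - \mathbf{h}^j\| : \mathbf{h}^i \neq \mathbf{h}^j\}$, which is strictly positive because $\mathbf{H}$ is finite and separable (if all points coincide the argument below degenerates harmlessly and the condition on $\rho$ becomes vacuous). At its own centre $f(\mathbf{h}^i;\mathbf{h}^i,\bm{\theta}^i) = 1$, so agent $i$'s estimate $\pi^i_t(\mathbf{h}^i,\cdot)$ is the constant $\sigma(\alpha)$; at any $\mathbf{h}^k \neq \mathbf{h}^i$ one has $f(\mathbf{h}^k;\mathbf{h}^i,\bm{\theta}^i) \leq 1 - \rho\delta^2$, so agent $i$'s estimate there is at most $\sigma(\alpha(1 - \rho\delta^2))$, which can be made smaller than any prescribed positive number by taking $\rho$ large. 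Since $\beta$ and $\mathbf{q}_0$ are fixed, the LMSR marginal cost $\kappa^{0}_t(1)$ and $\kappa^{1}_t(1)$ are bounded functions of the state, monotone in each asset quantity; the point of the construction is that on the states actually reachable when the market is conditioned on a fixed $\mathbf{h}^i$ these costs stay bounded away from $0$, so that taking $\rho$ large enough prevents every agent other than those sitting exactly at $\mathbf{h}^i$ from ever trading there.

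Now fix a conditioning point $\mathbf{h}^i$; by the symmetry of the LMSR rule under swapping Assets $0$ and $1$, assume $y^i = 1$. First one shows by induction on $t$ that no Asset-$0$ agent ever trades at this input: the inductive hypothesis is that $q^0_t$ still equals $q^0_0$ and $q^1_t$ has not exceeded a fixed bound $Q_{\max}$; this keeps $\kappa^0_t(1)$ above a fixed positive constant $\kappa^0_{\min}$, and choosing $\rho$ so that $\sigma(\alpha(1-\rho\delta^2))$ lies below $\kappa^0_{\min}$ (and below the analogous bound for the Asset-$1$ agents located elsewhere) forces the inductive step. Consequently $q^0_t \equiv q^0_0$, the sequence $q^1_t$ is nondecreasing, and $p^1(\mathbf{h}^i,t)$ is nondecreasing and bounded above by $1$, hence convergent. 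Agent $i$ itself keeps buying Asset $1$ as long as $\sigma(\alpha) \geq \kappa^1_t(1)$, which by monotonicity of $\kappa^1_t(1)$ in $q^1$ holds until $q^1_t$ reaches the level at which $\kappa^1_t(1) = \sigma(\alpha)$. A short computation shows that the value of $\kappa^1_t(1)$ at the state where $p^1 = 1 - \epsilon/2$ equals the explicit constant $\frac{1}{\beta}\log\!\big(\tfrac{\epsilon}{2} + e^{\beta}(1 - \tfrac{\epsilon}{2})\big)$, which is strictly less than $1$; so if $\alpha$ is taken large enough that $\sigma(\alpha)$ exceeds it, agent $i$ keeps buying until $p^1$ has climbed past $1 - \epsilon/2$, after which $p^1(\mathbf{h}^i,t)$ can only remain there. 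Thus $\lim_{t\to\infty} p^1(\mathbf{h}^i,t) \geq 1 - \epsilon/2 > 1 - \epsilon$, i.e. the limit is within $\epsilon$ of $y^i = 1$; the case $y^i = 0$ is identical with the two assets interchanged, giving a limit below $\epsilon$.

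The delicate point, and where the apparent circularity must be broken, is the order in which the constants are fixed: $\alpha$ is chosen first, as a function only of $\epsilon$, of $\beta$, of $\mathbf{q}_0$, and of the multiplicities of the points of $\mathbf{H}$; this simultaneously pins down $Q_{\max}$ and hence the positive lower bounds $\kappa^0_{\min}$ and $\kappa^1_{\min}$ on the marginal costs at every conditioning point; only then is $\rho$ chosen large enough to push all of the cross-terms $\sigma(\alpha(1-\rho\delta^2))$ below all of those bounds at once. The remaining ingredients — positivity of $\delta$, monotonicity of the LMSR price and marginal cost in the asset quantities, and convergence of a bounded monotone sequence — are routine. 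I expect the main obstacle to be precisely the reason the conclusion is an $\epsilon$-approximation rather than an exact identity: $\kappa^{y}_t(1) \to 1$ as the held quantity $q^{y}_t \to \infty$, whereas the sigmoidal estimate $\sigma(\alpha)$ is bounded strictly below $1$, so agent $i$ necessarily halts at a finite quantity and $p^1(\mathbf{h}^i,t)$ can be driven only to within $\epsilon$ of its target; essentially all of the quantitative bookkeeping is in service of making that residual gap smaller than the prescribed $\epsilon$.
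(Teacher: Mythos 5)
Your construction is correct, and it reaches the conclusion by a genuinely different route than the paper. The paper fixes the agent radii from the data geometry ($\theta^i_j = 1/r_i^2$ with $B_{r_i}(\mathbf{h}^i)$ isolating $\mathbf{h}^i$), uses $\alpha$ to push $\pi^i(\mathbf{h}^i) > 1-\delta$ and $\pi^i(\mathbf{h}^j) < \delta$, and then must still confront the case in which off-center agents clear the (small) marginal cost of the opposite asset and trade against Agent $i$; it handles this by shrinking the liquidity parameter $\beta$ so that those adverse purchases cannot move the price out of the band $(1-2\delta,\,1-\delta)$. You instead eliminate that adverse case entirely: after fixing $\alpha$ (hence the stopping level $Q_{\max}$, hence positive lower bounds on the reachable marginal costs), you take the sharpness $\rho$ large enough that every cross-estimate $\sigma(\alpha(1-\rho\delta^2))$ falls below those bounds, so by induction only the agents sitting at $\mathbf{h}^i$ ever trade. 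Your explicit ordering of the constants is the right way to break the apparent circularity, and it is the analogue of the paper's remark that a single sufficiently small $\beta$ works for all time. What your version buys: the price sequence is monotone, so the limit in the proposition genuinely exists (in the paper's Case II the price may oscillate inside the band, and the displayed limit should strictly be read as a limsup/liminf statement), and your argument works for an arbitrary fixed $\beta>0$ rather than requiring $\beta$ to be tuned. What the paper's version buys: it keeps $\beta$ as the single damping knob, which matches how the market is actually operated, and it tolerates off-center agents remaining marginally active, which is closer to the covering interpretation used later in the paper. Your closing observation about why only an $\epsilon$-approximation is achievable ($\kappa^{y}_t(1)\to 1$ while $\sigma(\alpha)<1$) is accurate and is implicit, though never stated, in the paper's proof.
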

\begin{proof} Set $\tau = 0$. The fact that $\mathbf{H}$ is finite implies there is a set of open spheres centered at $\mathbf{h}^1,\dots,\mathbf{h}^N$ with radii $r_1,\dots,r_N$ so that:
\begin{equation}
    \mathbf{h}^j \in B_{r_i}(\mathbf{h}^i) \iff \mathbf{h}^j = \mathbf{h}^i
\end{equation}
From \cref{eqn:EllipsoidPrice}, 
for all $i$ and $j$, define $\theta^i_j = 1/r_i^2$. For all $i$ set $w_p^i = \theta_0^i = 0$. Assume that Agent $i$ purchases only Asset $y^i$. For Agent $i$ using \cref{eqn:EllipsoidPrice} the estimated price given $\mathbf{h}^i$ is constant and given by: 
\begin{equation}
    \pi^i(\mathbf{h}^i) = \pi_t^i(\mathbf{h}^i, p_t^{y^i};\mathbf{h}^i,\bm{\theta}^i,\alpha^i,w_p^i) = \sigma(\alpha^i) > \frac{1}{2}.
\end{equation}
Likewise, it is clear that for $\mathbf{h}^j \neq \mathbf{h}^i$:
\begin{equation}
    \pi^i(\mathbf{h}^j) = \pi_t^i(\mathbf{h}^j, p_t^{y_i};\mathbf{h}^i,\bm{\theta}^i,\alpha^i,w_p^i) < \frac{1}{2},
\end{equation}
since by construction:
\begin{displaymath}
1 - \sum_k \frac{\left(h^j_k - h^i_k\right)^2}{r_i^2} < 0.
\end{displaymath}
Set $\alpha^i = \alpha$ so that (by choice of $\alpha$) for all $i,j$:
\begin{align*}
    \pi^i(\mathbf{h}^i) &> 1 - \delta\\
    \pi^i(\mathbf{h}^j) & < \delta,
\end{align*}
for a $\delta \in (0,\epsilon/2)$. Such an $\alpha$ must exist because $\sigma$ is monotonic and bounded between $0$ and $1$. When $\mathbf{h}^i$ is used as the market input (i.e., $\mathbf{x} = \mathbf{h}^i$), then Agent $i$ will purchase one share of Asset $y^i$ per epoch until the first time $t^{(1)}$ when:
\begin{displaymath}
1 - \delta < \pi^i(\mathbf{h}^i) < \kappa^{y_i}_{t^{(1)}}.
\end{displaymath}
Choose $\beta$ small enough to ensure that at this point:
\begin{equation}
1 - \delta < p^{y_i}_{t^{(1)}} = \frac{e^{\beta t^{(1)}}}{1 + e^{\beta t^{(1)}}} < \kappa^{y_i}_{t^{(1)}}.
\label{eqn:BetaReq1}
\end{equation}
There are two possibilities.

\noindent{\textbf{Case I}:} For all $j$:
\begin{displaymath}
\pi^j(\mathbf{h}^i) < \delta < \kappa^{1-y_i}_{t^{(1)}}.
\end{displaymath}
In this case, the market converges to price $p^{y_i}_{t^{(1)}} > 1 - \delta > 1 - \epsilon$ as required.

\noindent{\textbf{Case II:}} There is at least one $j$ so that \begin{displaymath}
\pi^j(\mathbf{h}^i) > \kappa^{1-y_i}_{t^{(1)}}.
\end{displaymath}
At $t^{(1)}$ all such agents will purchase shares of asset $(1-y_i)$ and will continue to do so until $t^{(2)}$ at which point either Case I holds or Agent $i$ purchases again. In each case, assume $\beta$ is chosen small enough so that at time $t^{(2)}$:
\begin{equation}
p^{y^i}_{t^{(2)}} > 1 - 2\delta. 
\label{eqn:BetaReq2}
\end{equation}
This ensures that the purchases of the other agents cannot drive the price too far from $1 - \delta$. Such a $\beta$ must exist because asset price moves are monotonically decreasing in $\beta$. Since $\pi^i(\mathbf{h}^i)$ and $\pi^j(\mathbf{h}^i)$ are fixed for all time and $\mathbf{H}$ is finite, a smallest fixed value of $\beta$ must exist to make \cref{eqn:BetaReq1,eqn:BetaReq2} true for all time. (See \cref{fig:PriceDiff}.) We repeat the above logic to see that for time $t \geq t^{(1)}$, $p^{y_i}_t \in (1 - 2\delta, 1 - \delta)$ and \cref{eqn:Limit} holds. This completes the proof.
\end{proof}
\begin{figure}[htbp]
\centering
\includegraphics[width=0.45\columnwidth]{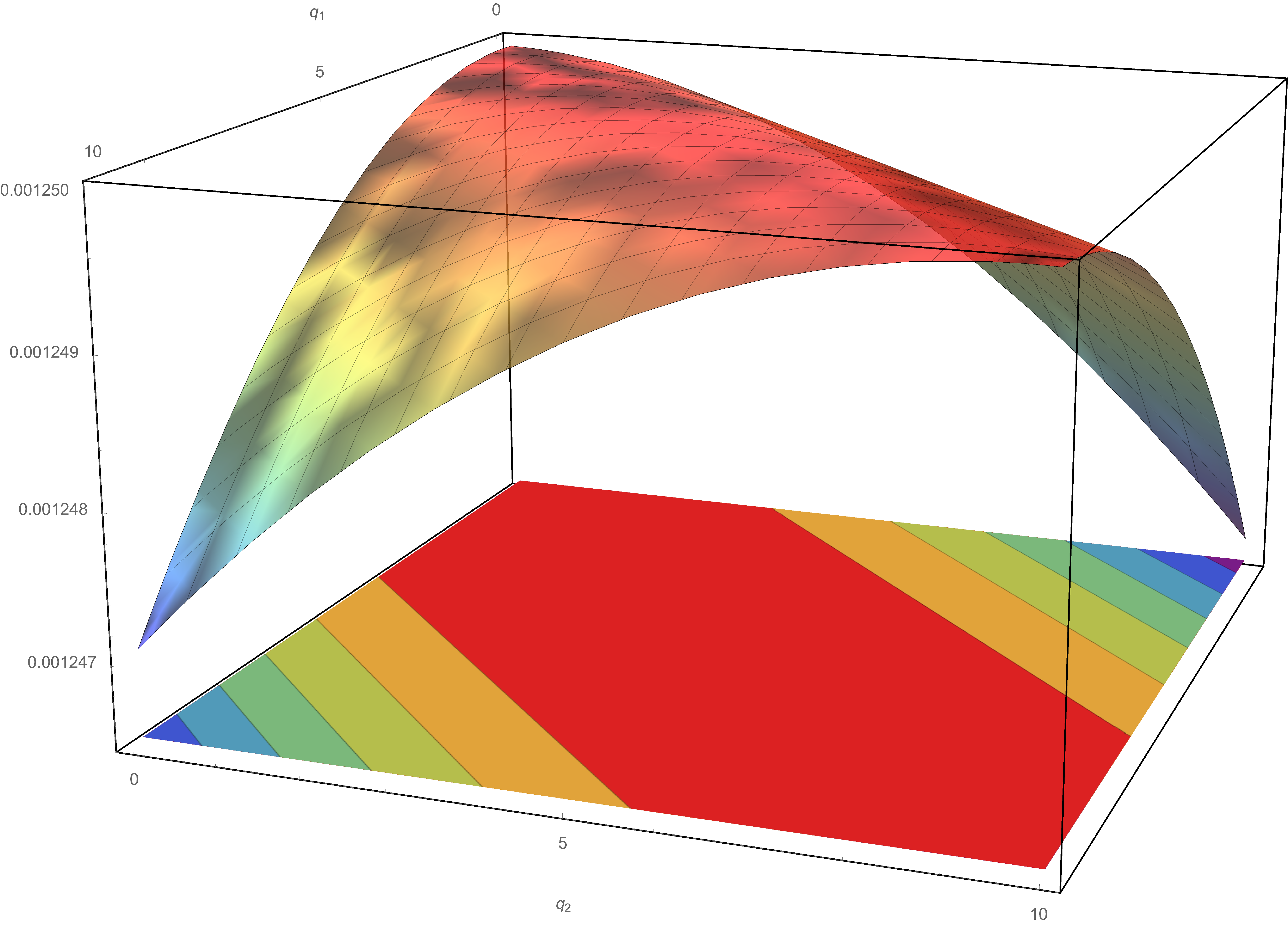}
\caption{We illustrate the difference between the spot price $p^1_t$ and the purchase price $\kappa^1_t(\Delta q)$ for asset one under varying values of $q_1$ and $q_2$ with $\Delta q = 1$. The value of $\beta = 1/100$. As $\beta$ decreases, the difference $\kappa^1_t(\Delta q) - p^1_t \to 0$. Thus ensuring \cref{eqn:BetaReq1,eqn:BetaReq2}}.
\label{fig:PriceDiff}
\end{figure}
Using the prior result, it is straightforward to see that if $D \subset\mathbb{R}^n$ is a simply connected closed and bounded set and $\chi_D(\mathbf{x})$ is its characteristic function, then if $\epsilon > 0$, there is a market $M$ with agents $\mathcal{A} = \{a^1,\dots,a^N\}$ (for some possibly large N) so that:
\begin{equation}
    \int_D \left|\chi_D(\mathbf{x}) - \bar{p}^1(\mathbf{x})\right|^2 \, d\mathbf{x} < \epsilon.
    \label{eqn:Approximation2}
\end{equation}
To see this, choose a large but finite sample of points from $D$ and add to this an appropriately large sample of points near the boundary of $D$. Call this set $\mathbf{H}$ and apply an argument like the one for \cref{prop:Approximation} to construct the market. From this we conclude:
\begin{proposition} If $D$ is a finite union of simply connected closed and bounded subsets of $\mathbb{R}^n$ and $\epsilon > 0$, then there is a market $M$ and a finite (but large) set of agents so that \cref{eqn:Approximation2} holds. \hfill\qedsymbol
\label{prop:Disjoint}
\end{proposition}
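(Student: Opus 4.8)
The plan is to bootstrap \cref{prop:Disjoint} from the single–component claim established immediately above \cref{eqn:Approximation2}, treating each piece of the union separately and then merging the agent populations. Write $D=\bigcup_{k=1}^m D_k$ with each $D_k$ simply connected, closed, and bounded (and, as in the single–component setting, regular enough that each $\partial D_k$ has Lebesgue measure zero). First I would apply the single–component construction to each $D_k$, but with error budget $\epsilon/m$ in place of $\epsilon$: this produces, for each $k$, a finite sample $\mathbf{H}_k\subseteq D_k$ (a dense interior sample together with a sample near $\partial D_k$, exactly as in the paragraph preceding \cref{eqn:Approximation2}), ellipsoid radii and scalings $\alpha^i$, weights $\theta_0^i=w_p^i=0$, and a liquidity constant $\beta_k>0$, defining a sub–market $M_k$ all of whose agents buy Asset $1$ and for which $\int_{D_k}\lvert 1-\bar p^1_{M_k}(\mathbf{x})\rvert^2\,d\mathbf{x}<\epsilon/m$. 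I would then take $\mathbf{H}=\bigcup_k\mathbf{H}_k$, keep every agent's parameters, and set the common liquidity constant to $\beta=\min_k\beta_k>0$ (positive since there are finitely many pieces), obtaining the market $M$.

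The crux is to argue that fusing the populations never degrades the approximation on any individual $D_k$. Because $\chi_D\equiv 1$ on $D$, every agent of $M$ buys \emph{only} Asset $1$; hence along $\Gamma_\mathbf{x}$ the quantity $q^1_t$ is non-decreasing, $q^0_t$ is constant, the spot price $p^1_t(\mathbf{x})$ is monotone non-decreasing in $t$ and bounded above by $1$, and so $\bar p^1_M(\mathbf{x})=\lim_{t\to\infty}p^1_t(\mathbf{x})$ exists for every $\mathbf{x}$. As in the proof of \cref{prop:Approximation}, and because $w_p^i=0$ makes each estimate $\pi^i(\mathbf{x})$ a fixed constant, once $\beta$ is small enough the limit price at $\mathbf{x}$ agrees, up to the market-maker gap $\kappa^1_t-p^1_t$ that tends to $0$ as $\beta\to 0$ (\cref{fig:PriceDiff}, \cref{eqn:BetaReq1,eqn:BetaReq2}), with $\max\{p^1_0,\max_i\pi^i(\mathbf{x})\}$ over the agents present. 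Passing from the population $\mathbf{H}_k$ to the larger population $\mathbf{H}$ only enlarges this finite set of constants, so $\bar p^1_M(\mathbf{x})\ge\bar p^1_{M_k}(\mathbf{x})$ up to an error that is uniform in $\mathbf{x}$ and shrinks with $\beta$; absorbing that error into the per-piece budget (e.g.\ run each sub-construction at tolerance $\epsilon/(4m)$ and then pick $\beta$ small enough that the accumulated gap contributes less than $\epsilon/2$) and using $\chi_D=1$ on $D_k$ together with all prices lying in $(0,1)$ gives $\int_{D_k}\lvert 1-\bar p^1_M(\mathbf{x})\rvert^2\,d\mathbf{x}<\epsilon/m$ for each $k$.

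Finally, since the integrand is non-negative and $D=\bigcup_k D_k$,
\[
\int_D\bigl\lvert \chi_D(\mathbf{x})-\bar p^1_M(\mathbf{x})\bigr\rvert^2\,d\mathbf{x}
\;\le\;\sum_{k=1}^m\int_{D_k}\bigl\lvert 1-\bar p^1_M(\mathbf{x})\bigr\rvert^2\,d\mathbf{x}
\;<\;\sum_{k=1}^m\frac{\epsilon}{m}\;=\;\epsilon ,
\]
which is \cref{eqn:Approximation2}. I expect the only genuine obstacle to be the bookkeeping in the middle step: one must be sure that agents created for a component $D_{k'}$ do not move the price on a different component $D_k$ in the wrong direction, and that a single liquidity constant serves all components simultaneously. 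Both are consequences of the purchasing being one-sided (everyone buys Asset $1$), which makes the LMSR price monotone under adding agents and makes the required $\beta$-conditions only easier to meet as $\beta$ decreases; with finitely many agents and finitely many components these pass through without incident. Overlaps among the $D_k$ and the measure-zero boundaries $\partial D_k$ need no special handling, since they only inflate the right-hand side of the union bound above.
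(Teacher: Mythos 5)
Your proposal is correct and follows essentially the same route as the paper, which simply asserts the result by sampling each component, applying the construction of \cref{prop:Approximation}, and taking the union over the finitely many pieces. Your additional observations---the per-component error budget $\epsilon/m$, and the monotonicity of the LMSR price under merging populations that all buy Asset 1---are details the paper leaves implicit, but they are consistent with and only strengthen its sketch.
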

We effectively illustrate \cref{prop:Disjoint} in \cref{sec:Inventor}.

\subsection{Convergence}
Let:
\begin{equation}
    \Omega^i_t = 
    \left\{
    \mathbf{x} \in \mathbb{R}^n : 
    \frac{1}{\kappa^{y_i}_t}\left(\pi^i_t\left(\mathbf{x},p_t^{y_i};\mathbf{h}^i;\bm{\theta}^i,\alpha^i,w_p^i\right) - \kappa^{y_i}_t[\Delta q^{y_i}]\right) \geq \tau
    \right\}
\end{equation}
the following proposition provides a sufficient condition for the convergence of the market price to a single value.
\begin{proposition} Consider a market $M$ with agent set $\mathcal{A} = \{a^1,\dots,a^N\}$ and a fixed $\beta$, $\tau$. Given an input $\mathbf{x} \in \mathbb{R}^n$, if there is a time $t^*$ and an index set $I^* = \{i_1,\dots,i_k\} \subset \{1,\dots,N\}$ so that for all $t \geq t^*$:
\begin{equation}
    \mathbf{x} \in \bigcap_{i \in I^*} \Omega^i_t,
\end{equation}
and if $j \not\in I^*$, then $\mathbf{x} \not\in \Omega^j_t$, then the price $p^1_t$ converges to a fixed value.
\label{prop:Convergence}
\end{proposition}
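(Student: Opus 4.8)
The plan is to use the hypothesis in its strongest form: it does not merely say that $\mathbf{x}$ lies in each $\Omega^i_t$, it says that the set of \emph{active} agents (those that buy at epoch $t$) is exactly $I^*$ for every $t\ge t^*$. Once the active set is frozen, the asset quantities grow linearly, and the limiting price can simply be read off the LMSR formula.

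First I would split the active set by asset class: let $I^*_0=\{i\in I^*: y^i=0\}$ and $I^*_1=\{i\in I^*: y^i=1\}$, and set $k_0=|I^*_0|$, $k_1=|I^*_1|$, so $k_0+k_1=k$. By hypothesis, for every $t\ge t^*$ and every $i\in I^*$ we have $\mathbf{x}\in\Omega^i_t$, which (with $\Delta q^{y_i}=1$) is precisely the purchase condition in the definition of $\gamma_i$; hence agent $i$ buys exactly one unit of Asset $y^i$ at epoch $t$. For $j\notin I^*$ we have $\mathbf{x}\notin\Omega^j_t$, so agent $j$ buys nothing. Since all agents at epoch $t$ evaluate their rule against the same state $\mathbf{q}_t$ (the quantities entering $\pi^i_t$ and $\kappa^{y_i}_t$ are those at time $t$), the per-epoch increments are constant for $t\ge t^*$: $q^0_{t+1}=q^0_t+k_0$ and $q^1_{t+1}=q^1_t+k_1$, i.e. $q^0_t=q^0_{t^*}+k_0(t-t^*)$ and $q^1_t=q^1_{t^*}+k_1(t-t^*)$.

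Substituting into the LMSR price gives
\[
p^1_t=\frac{\exp(\beta q^1_t)}{\exp(\beta q^0_t)+\exp(\beta q^1_t)}
=\frac{1}{1+\exp\!\big(\beta\big[(q^0_{t^*}-q^1_{t^*})+(k_0-k_1)(t-t^*)\big]\big)},
\]
and I take $t\to\infty$. If $k_0>k_1$ the exponent diverges to $+\infty$ and $p^1_t\to 0$; if $k_0<k_1$ it diverges to $-\infty$ and $p^1_t\to 1$; if $k_0=k_1$ the exponent is constant and $p^1_t$ is eventually constant. In every case the tail of $(p^1_t)$ is monotone and bounded in $[0,1]$, hence convergent, which is the claim. (One may add as a remark that the hypothesis is in fact only self-consistent when $k_0=k_1$: if one class dominates, its LMSR trade cost $\kappa^{y_i}_t\to 1$, eventually exceeding the bounded estimate $\pi^i_t=\sigma(\cdot)<1$, which would push those agents out of $I^*$; this is not needed for the proof but explains why this sufficient condition is compatible with the limit-cycle behavior discussed later.)

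There is no real obstacle here. The only two points requiring care are (i) justifying that a frozen active set produces constant per-epoch quantity increments, which rests on the convention that all purchases in epoch $t$ are computed from $\mathbf{q}_t$ and applied together; and (ii) the three-way case bookkeeping on the sign of $k_0-k_1$. If one wishes to avoid the simultaneity convention, the same conclusion still follows: within any epoch the quantities only increase, $p$ and $\kappa$ move monotonically, and the total change per epoch is bounded by $k$, so $q^0_t-q^1_t$ remains asymptotically linear in $t$ with slope $k_0-k_1$ and the limit argument is unchanged.
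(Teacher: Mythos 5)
Your proof is correct and follows essentially the same route as the paper's: freeze the active set, observe that the quantities grow linearly with rates $k_1$ and $k_0$ (the paper's $r$ and $s$), substitute into the LMSR price, and case-split on the sign of $k_0-k_1$ to get the limit $0$, $1$, or the constant value. Your added remark on the self-consistency of the hypothesis when one class dominates is a nice observation the paper does not make, but it does not change the argument.
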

\begin{proof} Suppose there is a $t^*$ and $I^* = \emptyset$. Then no agent purchases occur at time $t \geq t^*$ and the market price $p^1_t$ remains constant at the value $p^1_{t-1}$. If $I^*$ is not empty, then assume there are $r \geq 0$ agent indices who buy Asset 1 in $I^*$ and $s \geq 0$ agent indices who buy Asset 0 in $I^*$. Then for all time the spot price for Asset 1 is given by:
\begin{equation}
    p^1_t = \frac{\exp\left[\beta\left(q_{t-1}^1 + rt\right)\right]}{\exp\left[\beta\left(q_{t-1}^1 + rt\right)\right] + \exp\left[\beta\left(q_{t-1}^0 + st\right)\right]},
\end{equation}
because at all future times the agents in $I^*$ will purchase 1 unit of the appropriate asset. Taking the limit at $t \to \infty$ yields:
\begin{equation}
    \bar{p}^1 = \begin{cases}
    0 & \text{if $s > r$}\\
    1 & \text{if $s < r$}\\
    \frac{\exp\left(\beta q^1\right)}{\exp\left(\beta q^1\right) + \exp\left(\beta q^0\right)} & \text{if $r = s$}
    \end{cases}
\end{equation}
This completes the proof.
\end{proof}
We note that when each agent is given a finite bank account, then convergence of the market is ensured and the decision logic must be amended to include a test for sufficient funds. 

It is easy to construct an example in which the market does not converge to a fixed point. To see this, consider a market with a two dimensional feature space and two agents with $\mathbf{h}^1 = (0,0)$ and $\mathbf{h}^2 = (2,0)$. Let $\mathbf{x} = (1.02,0)$. Set $\tau = 0$, $\beta = 1/5$ and for $i=1,2$ set $\alpha^i = 3$, $w_p^i = 2$. If we assume both agents have $r^i_j = 1.015$ (i.e., agent geometry is circular), then this market will oscillate in price forever as illustrated in \cref{fig:OscillatingPrice}.
\begin{figure}[htbp]
\centering
\includegraphics[width=0.45\columnwidth]{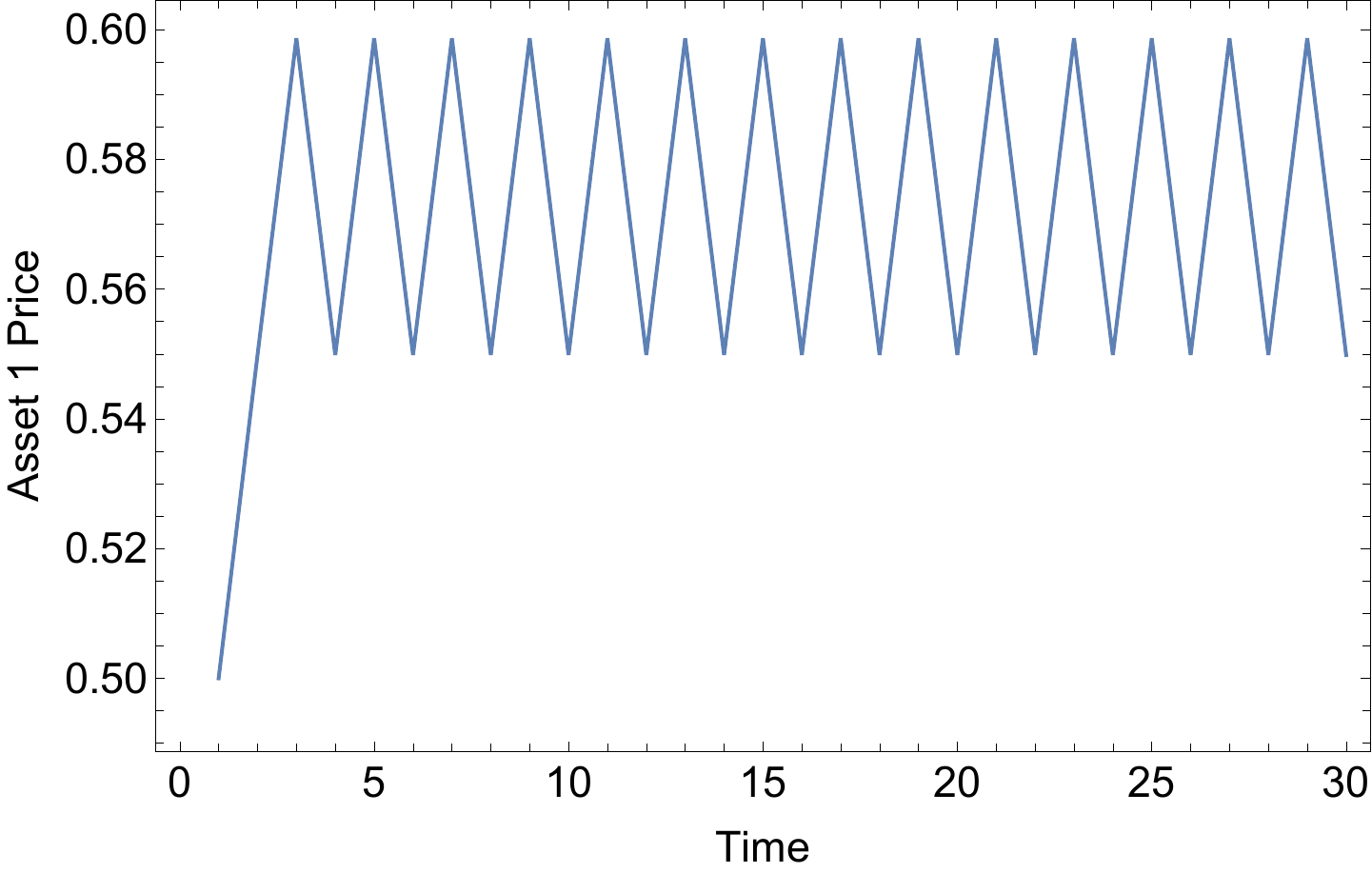}
\caption{An example of an oscillating market price for an input near the decision boundary.}
\label{fig:OscillatingPrice}
\end{figure}
The oscillation in the price is caused by the oscillation in the geometric structure of the sets $\Omega^1_t$ and $\Omega^2_t$. 
As the market price varies in time, each agent oscillates between determining the price is too high or low enough to purchase. Thus, when input information is close to a decision boundary we see that market prices may exhibit a limit cycle. Establishing sufficient conditions for the emergence of a limit cycle in the market is left to future work. However, as we have illustrated limit cycles will emerge when input (test) points are near multiple agent decision boundaries in feature space and thus can indicate indecision if the market is used as a machine learning model.

\section{Training Agents within a Market}\label{sec:Training}
In this section we discuss a practical implementation of the prediction market described above and detail a method to train such a market to approximate a data set. For practical purposes, we make three simplifying implementation changes:
\begin{enumerate}
    \item We assume time is finite. That is, the market will terminate after a fixed large time. 
    \item We assume all agents have a finite bank account.
    \item We assume that agents recurrently arrive at the market to buy assets with inter-arrival times governed by an exponential distribution. Thus not all agents interact with the market simultaneously.
\end{enumerate}
The third assumptions is made to increase the execution speed of the market and to ensure a sufficient number of training epochs can be executed in a reasonable amount of wall-clock time.

\subsection{Initialization}\label{subsec:Initialization}
Let each training data-point be denoted as $(\mathbf{x}^i, y^i)$, where $y^i$ denotes the output label. Let $m$ be the total number of training data-points. Define:
\begin{equation}
    \mathcal{C}_k = \left\{i \in \{1,\dots,m\} : y^i = k\right\},\\ 
\end{equation}
for $k = 0,1$. Training will proceed in batches. Assume a batch size of $b$, where $b < m$, to denote the number of data-points used to train the model in one pass. Thus, there will be a total of $\lceil m/b \rceil$ batches. A set of $n$ Agents are initialized for every data-point $\mathbf{x}_i$ in a batch $B_j$ where $i \in [1, b]$ and $j \in [1, \lceil m/b \rceil]]$. The agents are initialized as hyperspheres centered at $\mathbf{h}^i = \mathbf{x}^i$. To determine the initial radius, let
\begin{align*}
    r_1^i &= \argmin_{j \in \mathcal{C}_{y^i}} \left\lVert\mathbf{x}^i - \mathbf{x}^j\right\rVert^2\\
    r_2^i &= \frac{1}{2} \cdot \argmin_{j\in \mathcal{C}_{1-y^i}} \left\lVert\mathbf{x}^i - \mathbf{x}^j\right\rVert^2.
\end{align*}
These are the distances to the nearest point with similar classification and half the distance to the nearest point with opposite classification. Then set:
\begin{align*}
    p^i &= \max\left\{r_0, \min\left\{r_1^i, r_2^i\right\}\right\}\\
    q^i &= \max\left\{r_0', 2r_2^i\right\}
\end{align*}
where $r_0$ and $r_0'$ are default values. The radius of the hyper-sphere is initialized with:
\begin{displaymath}
    r^i \sim \mathcal{U}(p^i, q^i),
\end{displaymath}
where $\mathcal{U}$ is a uniform distribution. That is, we model each agent with an ellipsoid so all axial radii are initialized to $r^i$. The initial value for $w_p^i$ is chosen from a standard normal distribution for each $i$. Finally, if Agent $i$ is centered at $\mathbf{x}^i = \mathbf{h}^i$ with class $y^i$, then that agent will only purchase assets of Class $y^i$.

\subsection{Market Run}
Each market run is parameterized by an input feature vector $\mathbf{x}$ shared by all agents. This feature vector is used in agent purchase logic. Agents are initialized with a finite bank. During an execution of the market, each agent is seeded with an initial time it will interact with the market drawn from an exponential distribution. The next time of execution is set when the agent interacts with the market and uses the same exponential distribution. All agents have a common exponential distribution. Agents buy assets according to the decision logic discussed above and keep track of purchased assets and the price paid. There is a global clock that is updated to determine when agents participate. At market completion (after a fixed time has passed), agent profits and losses are calculated assuming assets that match the ground truth class $y^i$ are worth $1$ and other assets are valued at $0$.

\subsection{Evolutionary Algorithm}
The evolutionary algorithm defined below is used to identify parameters $\bm{\theta}^i$, $w^i_p$, and $\alpha$. For the purposes of this work, we assume that $\theta_0^i = 0$ is fixed for all $i$, we set $\beta = 1/100$ and $\tau = \frac{1}{10}$. Optimizing these parameters is a subject of future work.

\noindent\textbf{\underline{Evolutionary Algorithm}}\\
\noindent\textbf{Input:} Feature vectors $\mathbf{X}=\left\{\mathbf{x}^i\right\}_{i=1}^N$, ground truth labels $\left\{\mathbf{y}^i\right\}_{i=1}^N$
\begin{enumerate}
\item For each data point $\mathbf{x}^i$ with label $y^i$ create $n$ agents centered at $\mathbf{h}^i = \mathbf{x}^i$ and an initial random radius $r^i\sim \mathcal{U}(p^i, q^i)$, a random scale parameter $\alpha^i\sim \mathcal{U}(0.01, 5)$ and a random $w_p^i\sim \mathcal{N}(0, 1)$. Agents specialize in the purchase of shares of type $y^i$. 

\item Run $N$ markets one for each input $\mathbf{x}^i \in \mathbf{X}$. 

\item For each market, sort all agents into three groups (i) those that did not participate, (ii) those that made a profit and (iii) those that had a loss. 
\begin{enumerate}
\item For each center $\mathbf{h}^i$:
\begin{enumerate}
\item If no agent with center $\mathbf{h}^i$ participated, continue.
\item Among the agents who participated retain $l < n$ agents who had the highest profit (or lowest loss). 
\item Delete the $n-l$ under-performing agents.
\item Create $n-l$ new agents from the agent pool centered at $\mathbf{h}^i$ using mutation and crossover of the parameters $\alpha$, $\bm{\theta}$ and $w_p$. Specifically, mutation is carried out as follows:
\begin{enumerate}
    \item Compute 
    \begin{equation}\sigma = 2 \sqrt{\frac{1}{n}\Sigma_{i=1}^m \left(y^i - \bar{p}^{y^i}\right)^2}.
    \label{eqn:VarianceControl}
    \end{equation}
    \item Update $r^i \leftarrow r^i + \sigma \cdot \mathcal{U}(p^i - r^i, q^i - r^i)$. 
    \item Update $w_p^i \leftarrow w_p^i + \sigma \cdot \mathcal{N}(0,1)$.
\end{enumerate}
\end{enumerate}
\end{enumerate}
\item Goto 2. This process is repeated for $g$ generations.
\end{enumerate}
Because each agent is modeled by an ellipsoid with a finite volume, not every agent will participate in every market. In particular, if $\mathbf{x}^i \not\in\Omega^i_{t}$ for any time $t$, then Agent $i$ will not participate. Of those agents that do participate in a given market, those that are most successful are preserved and replicate with mutation and crossover. The mutation rate is controlled by the current root mean-square error of the approximation. As this value decreases, the mutation decreases. 

\section{Experimental Results}\label{sec:Results}
This section discusses the results obtained by the application of the proposed market model on standard datasets such as IRIS Dataset and Heart Disease Dataset. We also apply the model to perform the record linkage task of disambiguating inventor records from the USPTO PatentsView database as a real-world application usecase. For all the experiments, we have chosen $n = 5$ (agent replicants), $l = 3$ (retained agents) and $g = 20$ (generations).

\subsection{IRIS Dataset}
We study the standard IRIS data set \cite{kholerdi2018enhancement}, which consists of features describing three species of iris plants - Iris setosa, Iris virginica and Iris versicolor. The data set contains 50 instances of feature vectors from each class. It is known that Iris Setosa is \textit{linearly separable} from the other two classes. However, Iris Versicolor and Iris Virginica are \textit{not linearly separable} from each other. We use four attributes, length and width of sepals and petals, to classify an instance into one of the three classes.

The proposed market model is generalized to be a binary classifier. However, the dataset consists of three classes. Therefore, we take the union of two classes and train the model on the one-against-two binary classification problem. We used a train-test split of 75:25.
\paragraph{Union of Iris Setosa and Iris Versicolor}
We combined the two classes, Iris Setosa and Iris Versicolor, and represented them as Class 0. Class 1 was composed of data from to Iris Virginica. A test accuracy of 94.6\% was obtained in this case. A detailed analysis is shown in \cref{setosa_versicolor_combined_perf_metrics}. 
\begin{table}[!htb]
\centering
\begin{tabular}{l|lll}
\textbf{Class}                          & \textbf{Precision} & \textbf{Recall} & \textbf{F1-Score} \\
\hline
Class 0 (Setosa/Versicolor) & 1.00               & 0.91            & 0.95              \\
Class 1 (Virginica)                 & 0.88               & 1.00            & 0.95             
\end{tabular}
\caption{Shows the test performance when instances of Iris Setosa and Iris Versicolor are combined together as Class0.}
\label{setosa_versicolor_combined_perf_metrics}
\end{table}

\paragraph{Union of Iris Setosa and Iris Virginica}
We combined the instances of Iris Setosa and Iris Virginica as Class 0. Class 1 contains data from Iris Versicolor. A test accuracy of 97.29\% was observed and \cref{setosa_virginica_combined_perf_metrics} shows a detailed analysis.
\begin{table}[!htb]
\centering
\begin{tabular}{l|lll}
\textbf{Class}                         & \textbf{Precision} & \textbf{Recall} & \textbf{F1-Score} \\
\hline
Class 0 (Setosa/Virginica) & 0.96               & 1.00            & 0.98              \\
Class 1 (Versicolor)               & 1.00               & 0.93            & 0.96             
\end{tabular}
\caption{Shows the test performance when instances of Iris Setosa and Iris Virginca are combined together as Class0.}
\label{setosa_virginica_combined_perf_metrics}
\end{table}

\paragraph{Union of Iris Versicolor and Iris Virginica}
We combined the instances of Iris Versicolor and Iris Virginica as Class 1. Class 0 consists of data from Iris Setosa. A test accuracy of 100.0\% was observed \cref{versicolor_virginica_combined_perf_metrics} shows a detailed analysis. We have to note that instances of Iris Setosa are linearly separable from the other two classes and thus, the model is able to separate the two classes with 100\% accuracy in this case.
\begin{table}[!htb]
\centering
\begin{tabular}{l|lll}
\textbf{Class}                             & \textbf{Precision} & \textbf{Recall} & \textbf{F1-Score} \\
\hline
Class 0 (Setosa)                       & 1.00               & 1.00            & 1.00              \\
Class 1 (Versicolor/Virginica) & 1.00               & 1.00            & 1.00             
\end{tabular}
\caption{Shows the test performance when instances of Iris Setosa and Iris Virginca are combined together as Class0.}
\label{versicolor_virginica_combined_perf_metrics}
\end{table}

\subsection{Heart Disease Dataset}
This is a publicly available dataset \cite{Dua:2019} provided by UCI. There are four databases available for use within the dataset. Published experiments in Machine Learning use the Cleveland database with a maximum of 14 of the 76 available attributes which are known to be considerably linked to heart disease. We use the following 14 numerical attributes to train the market model to classify patients to one of the targets; presence of heart disease, no heart disease. 
\begin{enumerate}
    \item Age
    \item Sex: male, female
    \item Chest pain type: typical angina (angina), atypical angina (abnang), non-anginal pain (notang), asymptomatic (asymp)
    \item Trestbps: resting blood pressure on admission
    \item Chol: serum cholestrol
    \item Fbs: indicates whether fasting blood sugar is greater than 120 mg/dl
    \item Restecg: normal(norm), abnormal(abn): ST-T wave abnormality, ventricular hypertrophy (hyp) 
    \item Thalach: maximum heart rate achieved
    \item Exang: exercise induced angina
    \item Oldpeak: ST depression induced by exercise relative to rest
    \item Slope: upsloping, flat, downsloping: the slope characteristics of the peak exercise ST segment
    \item Ca: number of fluoroscopy colored major vessels
    \item Thal: normal, fixed defect, reversible defect - the heart status
    \item Class/target label
\end{enumerate}
The data set has a total of 303 data points. To evaluate the performance of the market (M), we split the data using an 80\%-20\% ratio. The market was tested on 20\% of the randomly sampled data. A total of 60 data points were used for testing the model. For one of the randomly chosen split, we obtained a test accuracy of 86.66\%. The confusion matrix associated with the test data is shown in \cref{fig:cf_matrix_heart_disease}.      

\begin{figure}[htbp]
\centering
\includegraphics[width=0.45\columnwidth]{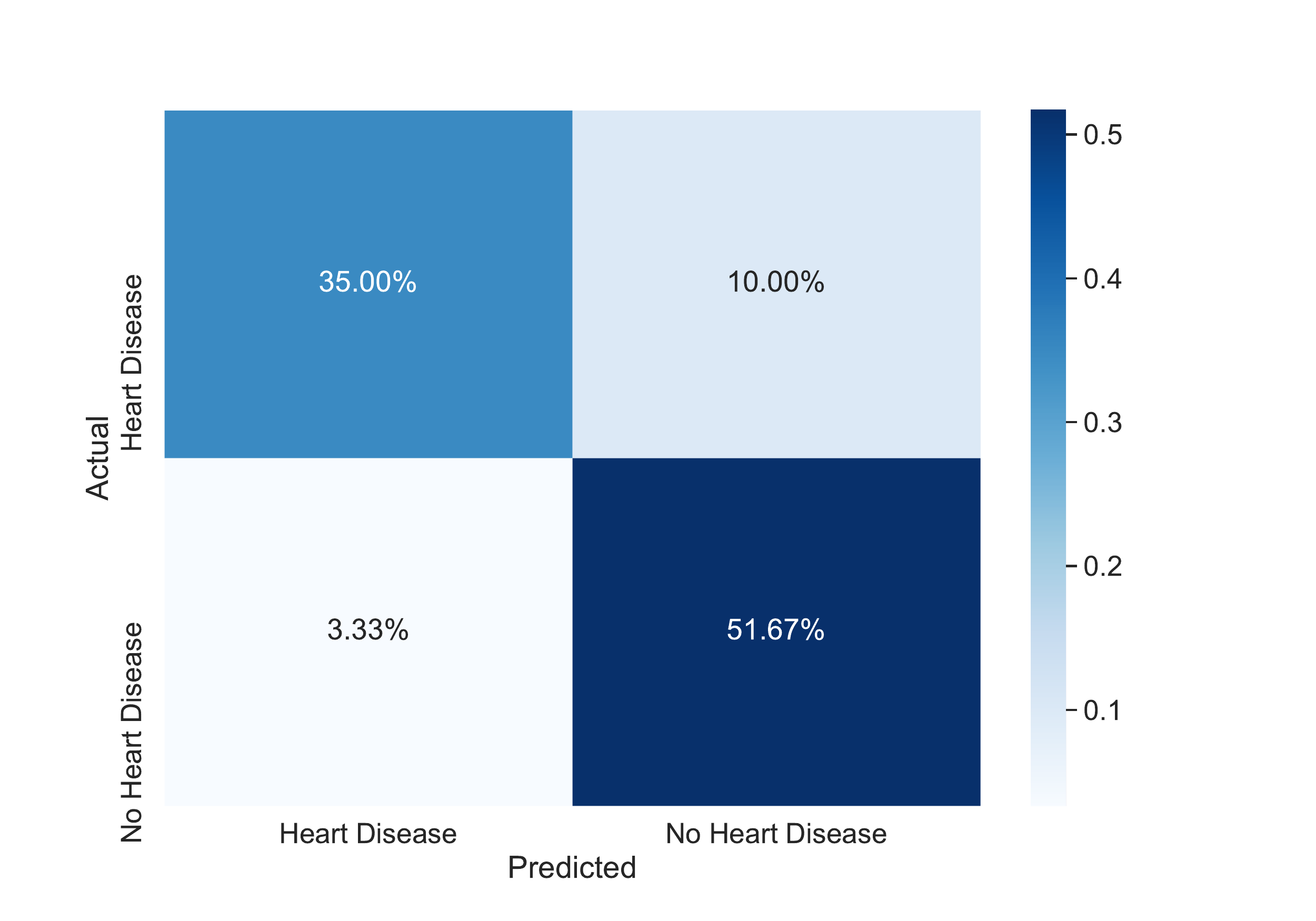}
\caption{Confusion Matrix obtained for the test set.}
\label{fig:cf_matrix_heart_disease}
\end{figure}

The obtained results from the market model are compared with the output obtained from a Random Forest (RF) classifier for the same split. The RF classifier obtained a test accuracy of 96.66\%. \cref{tab:f1_score_rf_market} compares the F1-Score obtained for both the models. We see that Random Forest outperformed the market in this case.

\begin{table}[!htb]
\centering
\begin{tabular}{|c|c|c|}
\hline
\multicolumn{1}{|l|}{{\color[HTML]{000000} \textbf{Model}}} & \multicolumn{1}{l|}{{\color[HTML]{000000} \textbf{No Heart Disease (\%)}}} & \multicolumn{1}{l|}{{\color[HTML]{000000} \textbf{Heart Disease (\%)}}} \\ \hline
\textbf{RF}                                                 & 96                                                                         & 97                                                                      \\ \hline
\textbf{M}                                                  & 84                                                                         & 89                                                                      \\ \hline
\end{tabular}
\caption{Shows the F1 scores for each class for both the classifiers.}
\label{tab:f1_score_rf_market}
\end{table}


\begin{figure}[htbp]
\centering
\includegraphics[width=0.45\columnwidth]{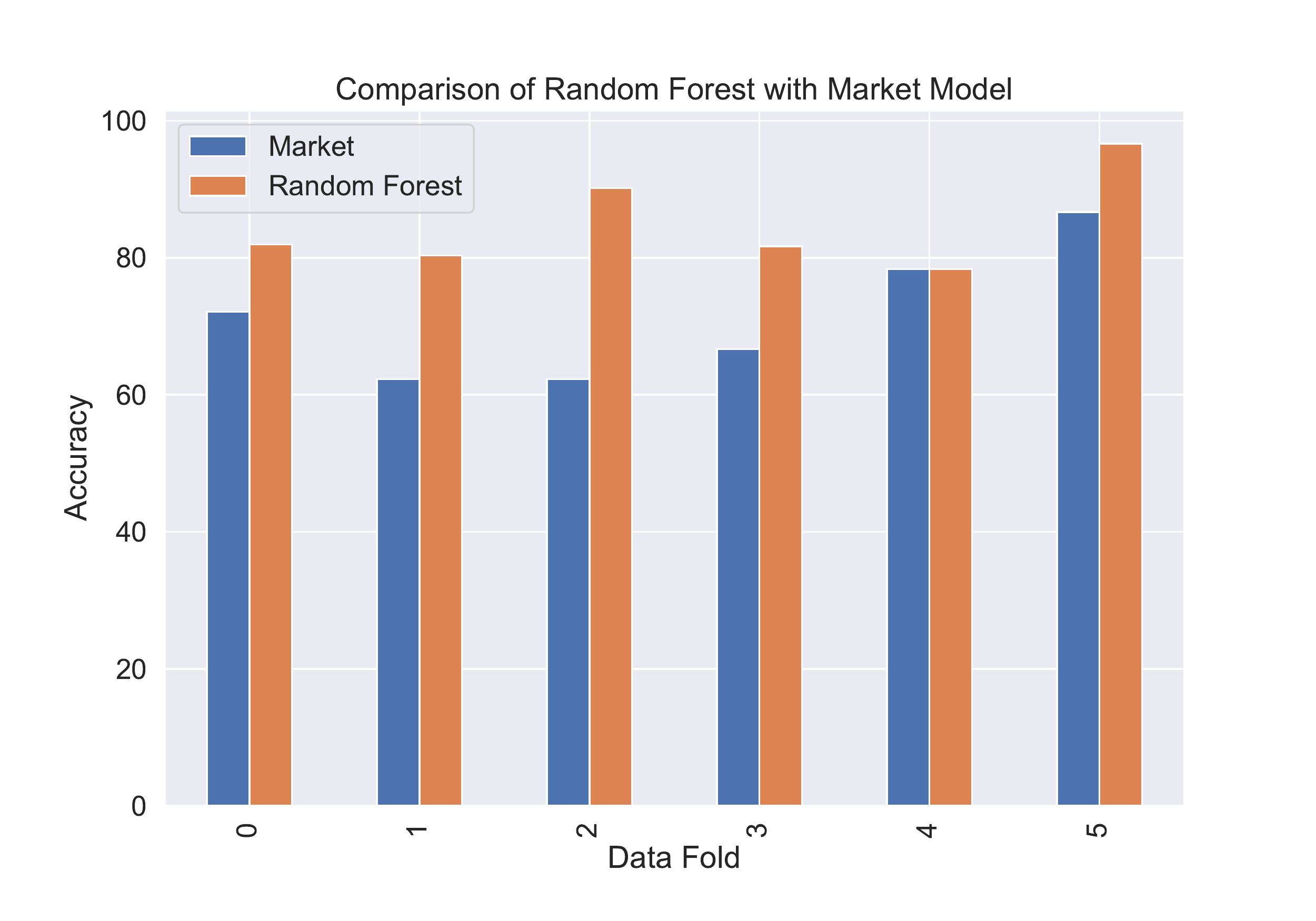}
\caption{Comparison of the Market Model with Random Forest Classifier for six different data splits.}
\label{fig:rf_market_comparison}
\end{figure}

\begin{figure}[htbp]
\centering
\subfloat[Class - Heart Disease]{\includegraphics[width=0.45\columnwidth]{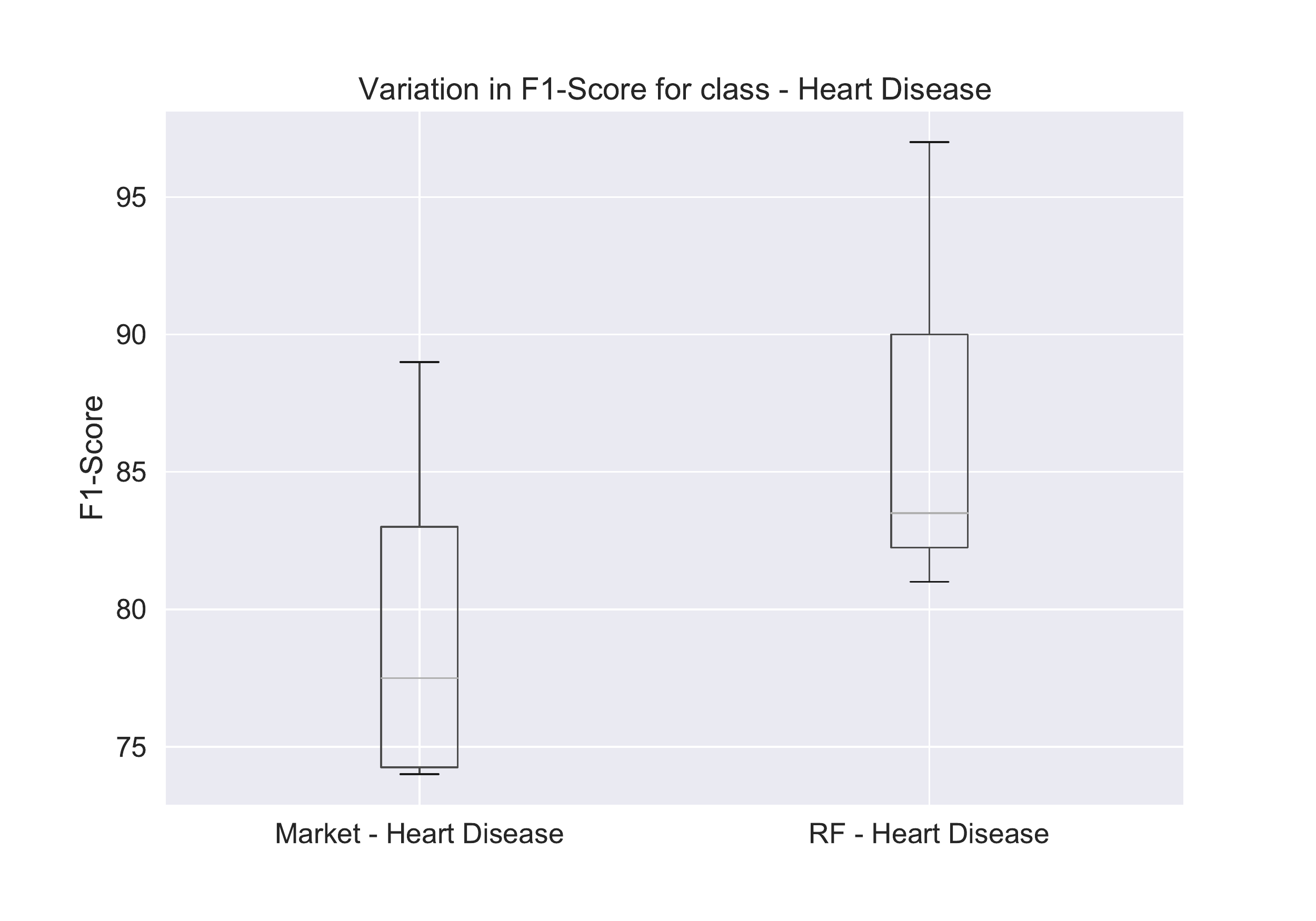}} \quad
\subfloat[Class - No Heart Disease]{\includegraphics[width=0.45\columnwidth]{Figures/F1_score_heart_disease.pdf}}
\caption{Variation in F1-Scores for both models in each class.}
\label{fig:f1_score_comparison}
\end{figure}

\begin{figure}[htbp]
\centering
\includegraphics[width=0.45\columnwidth]{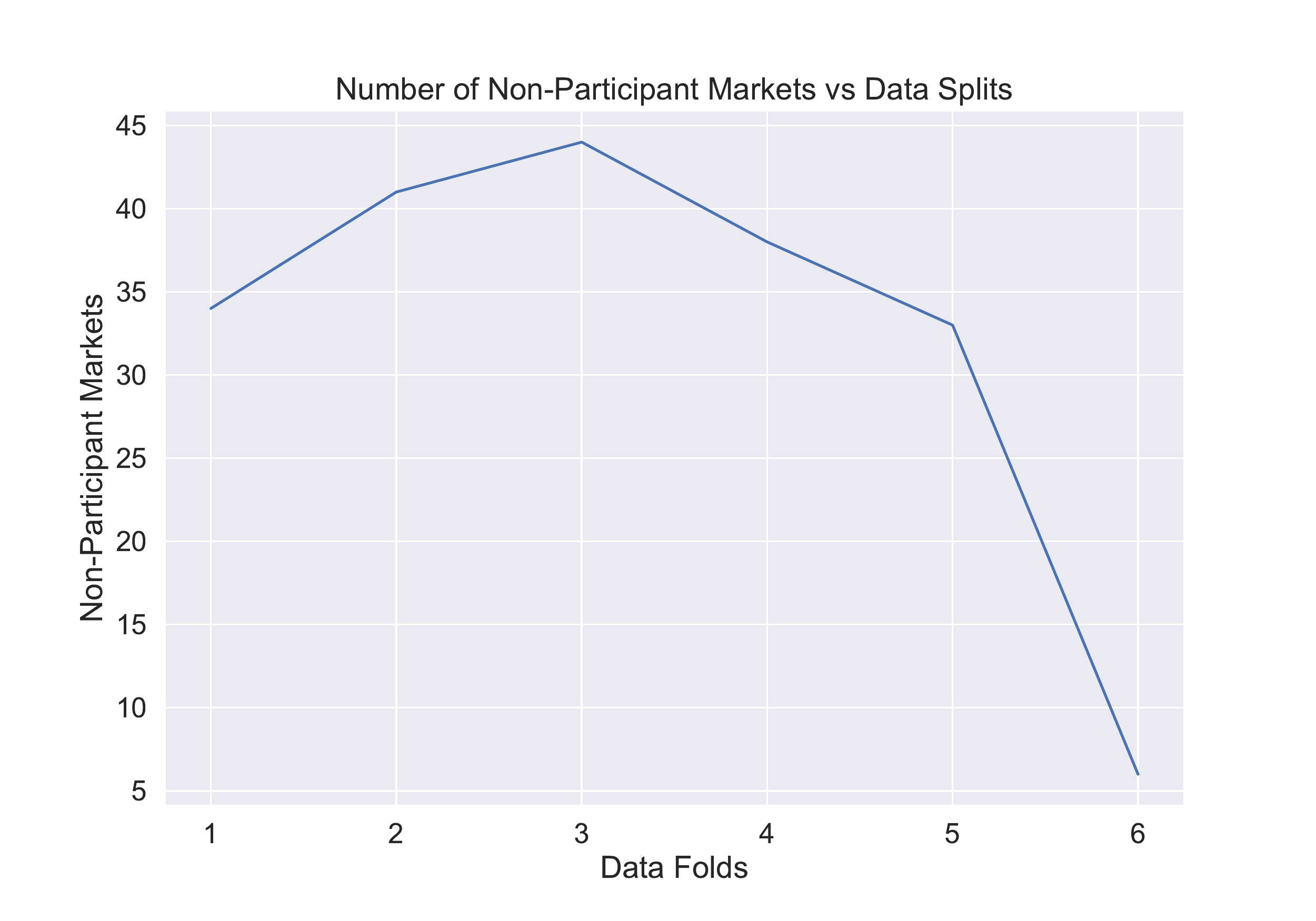}
\caption{Shows the number of markets with No-Agent Participation for six different data splits.}
\label{fig:non_participants}
\end{figure}

To measure the sensitivity of the model with respect to the variation in inputs, we performed the experiment with six randomly sampled data splits. A train-test ratio of 80\%-20\% was retained for all the splits. \cref{fig:rf_market_comparison} shows the comparison of the Market model with RF Classifier. We observe variations in the two models with respect to changing input data. The RF classifier outperformed the market in five out of six cases. Market performance was comparable to RF classifier for the fifth split. \cref{fig:f1_score_comparison} shows the box plot of F1-scores associated with each class for the two models.

The lower performance of the market can be attributed to a lack of generalization using the underlying geometry. The use of simple geometric agents allows us to quantify this. \cref{fig:non_participants} shows the number of markets with no agent participation for various data splits. We observed that accuracy increased with increase in agent participation across all markets. The highest obtained accuracy of 86.66\%, as seen in \cref{fig:rf_market_comparison}, had only six markets with no agent participation. For future work, we will include agents whose decision logic is characterized by either ellipsoids or a convex cone, which generalizes a hyperplane separator but also remains interpretable. The challenge in this will be to alter the evolutionary algorithm for account for agents with multiple geometries.



\subsection{Inventor Disambiguation}\label{sec:Inventor}

We curate a subset of the publicly available database that was released as part of the PatentsView\footnote{{https://www.patentsview.org/download/}} Inventor Disambiguation Workshop. A random sample of labeled inventor records from the database containing 346 patent records of 74 distinct inventors is selected. We then build a data set based on pairwise similarity vectors of records in the sample to train the prediction market classifier. We have three distinct types of pairs for the inventor/patent records in the data set which yields a total of 2646 patent record pairs:
\begin{enumerate}
    \item Positive pairs: We leverage the labeled records from the database relating to distinct inventors to create inventor clusters where all possible pairs within a cluster are assigned a label 1 indicating that they are the same person. 
    \item Similar negative pairs: For each inventor cluster in the sample, we retrieve patent records from the database such that the first name and the last name of the inventor are an exact match, but they are a different person; i.e., the record does not belong to this inventor cluster. These are prime candidates that make disambiguation necessary and have a label 0.
    \item Random negative pairs: We generate random pairs by using candidates that belong to different clusters that are assigned a label 0. 
\end{enumerate}

We compute the similarity and distance measures outlined in \cref{tab:disam_features} for each patent record pair for the respective inventor features in the sample.

\begin{table}[htbp]
\begin{center}\small
\begin{tabular}{c|c|c} 
\toprule
{\bf Type} & {\bf Measure} & {\bf Feature}\\
\midrule
\multirow{6}{*}{Token} & \multirow{6}{*}{Cosine, Jaccard Similarity} &  Title \\
                                                                     & & Section \\
                                                                     & & Subsection \\
                                                                     & & Group \\
                                                                     & & Sub-group \\ 
                                                                     & & Organization \\ 
                                                                    \midrule
\multirow{4}{*}{String} & \multirow{4}{*}{Jaro-Winkler, Soundex} &  First Name \\
                                                                     & & Last Name \\
                                                                     & & City \\
                                                                     & & State \\
                                                                    \midrule
 Geographic distance  & Haversine distance & Latitude $|$ Longitude \\              \bottomrule
 
\end{tabular}
\caption{USPTO Inventor Disambiguation pairwise feature classes}
\label{tab:disam_features}
\end{center}
\end{table}


 \cref{fig:disambiguation_dataset_plot_perplexity} offers a visualization of the data set in two dimensions obtained using t-SNE \cite{Maaten2008VisualizingDU} for nonlinear dimensional reduction. It showcases a general sense of the topology intrinsic to the data. Cluster sizes may not mean anything, nor does the distance between identified clusters or within them as discussed in \cite{Wattenberg2016HowTU}. However, we see that there exists a non-linear, complex separation in the overall topological structure, consistent across different perplexity settings. The two classes have cases where there are clearly identifiable clusters and some less so with overlapping instances that require complex decision boundaries. The occurrences of outlier instances of a class in clusters predominantly composed of instances of the other class is of particular interest. These cases are a great test to validate the performance of the synthetic prediction market classifier, which can utilize the subtle variances in local geometry to differentiate between the two classes. This also illustrates \cref{prop:Disjoint}. This data set is an ideal candidate to showcase the market's ability to distinguish non-linearly separable data. We hypothesized that the agent initialization as discussed in \cref{subsec:Initialization} would enable the market model to perform well by offering a good covering of the data set and is experimentally confirmed as discussed in subsequent results.

\begin{figure}[htbp]
\centering
\includegraphics[width=0.95\columnwidth]{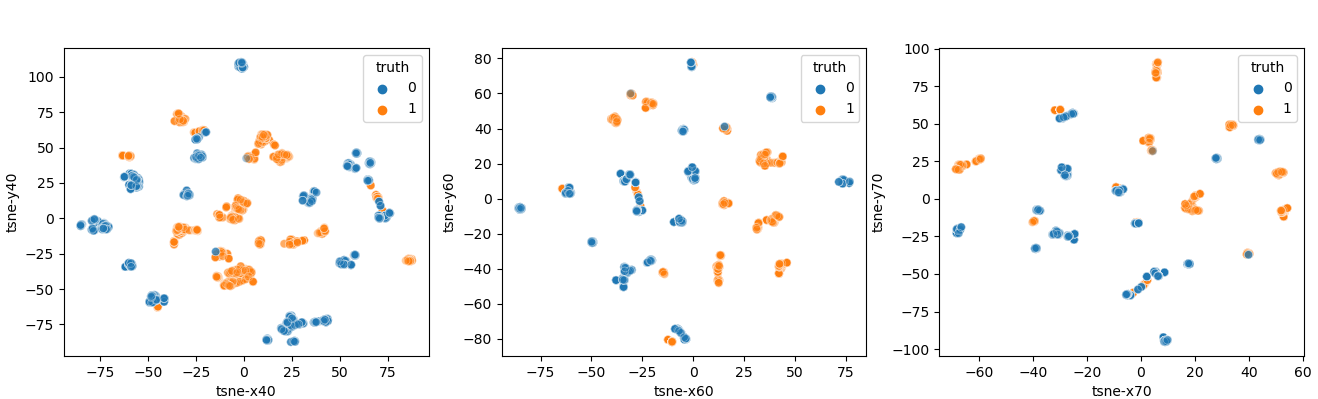}
\caption{Two dimensional t-SNE plots of the pairwise feature space for different perplexity configurations. Left to right: 40, 60, and 70.}
\label{fig:disambiguation_dataset_plot_perplexity}
\end{figure}



To evaluate the performance of the market, we split the data using an 80\%-20\% ratio. \cref{tab:disam_comparison} compares the performance of our model against a classic machine learning model (Random Forest Classifier). We see that both models perform very well for this dataset and obtain a classification accuracy of 99.81\%.

\begin{table}
\begin{center}\small
\begin{tabular}{c|c|c|c}
    \toprule
    {\bf Model} & {\bf Precision} & {\bf Recall} & {\bf F1-Score} \\
    \midrule
    RF & 0.996 & 1.0 & 0.998 \\
    \midrule
    M & 0.996 & 1.0 & 0.998 \\
     \bottomrule
\end{tabular}
\end{center}
\caption{F1 score for each model for USPTO inventor disambiguation.}
\label{tab:disam_comparison}
\end{table}

\section{Conclusions and Future Directions}\label{sec:Conclusions}
In this paper we study a specific class of synthetic binary prediction markets in which agent decision logic is specified by a convex semi-algebraic set. We showed that these prediction markets satisfy certain universal approximation properties and gave sufficient conditions for the market to converge to a final set of asset prices. We also showed that these markets can enter limit cycles, which indicate the the conditioning data may be near a decision boundary. We provided an evolutionary algorithm for training such a market on a given data set and illustrated this process on three example data sets. While the market under-performed the best-in-class random forest algorithms for some data sets, we were able to show consistent or equal performance to the random forest method in all tests. In addition, we use the underlying geometric structures to infer the reason for the under-performance and devised an approach to mitigate this in future work. 

For future work, we will introduce agents whose decision rules are characterized by convex cones. These agents will work along side the existing agents (who use ellipsoidal regions) to characterize data sets, thus improving generalization. We will also study the possible dynamics of these markets and determine whether a more robust stability theorem can be proven.

\section*{Acknowledgement} 
Portions of this work were sponsored by the DARPA SCORE Program (Cooperative Agreement W911NF-19-2-0272.)



\bibliographystyle{unsrt} 
\bibliography{DARPAMarketRefs}

\end{document}